\newtheorem{theorem}{Theorem}
\newtheorem{proposition}[theorem]{Proposition}
\newtheorem{lemma}[theorem]{Lemma}
\theoremstyle{definition}
\newtheorem{example}[theorem]{Example}
\newtheorem{definition}[theorem]{Definition}
\theoremstyle{remark}
\newtheorem{remark}[theorem]{Remark}
\newcommand{\R}{\mathbb{R}}
\newcommand{\N}{\mathbb{N}}
\newcommand{\Z}{\mathbb{Z}}
\newcommand{\la}{\langle}
\newcommand{\ra}{\rangle}
\newcommand{\be}{\begin{equation}}
\newcommand{\ee}{\end{equation}}
\begin{document}
%
\title{On the finiteness of accessibility test for nonlinear discrete-time systems}
%
%
%

\author{{Mohammad Amin Sarafrazi, 
       Ewa Pawluszewicz, Zbigniew Bartosiewicz 
        and \"{U}lle Kotta}
\thanks{Mohammad Amin Sarafrazi: 
 Rezvan complex,
Motahari Sq., Motahari Blvd., 71868-98544  Shiraz, Iran}
\thanks{E.Pawluszewicz is with Faculty of Mechanical Engineering, Bialystok University of Technology, 15-351 Bialystok, Poland}
\thanks{Z.Bartosiewicz is with Faculty of Computer Science, Bialystok University of Technology, 15-351 Bialystok, Poland}
\thanks{U.Kotta is with Department of Software Science, Tallinn University of Technology, 12616 Tallinn, Estonia}} 


\maketitle

\begin{abstract}

It is shown that for two  large subclasses of discrete-time nonlinear systems - analytic systems defined on a compact state space and  rational systems - the minimum length  $r^*$ for input sequences, called here accessibility index of the system, can be found, such that from any point $x$, system is accessible iff it is accessible for input sequences of length $r^*$. Algorithms are presented to compute $r^*$, as well as an upper bound for it, which can be computed easier, and hence provide  finite tests for determination of accessibility. The algorithms  also show how to construct the set of points from which the  system is not accessible in any finite number of steps. Finally, some  relations between  generic accessibility  of the system and accessibility of individual points in finite steps are given.
\end{abstract}


\begin{IEEEkeywords}
 Accessibility, Generic accessibility,  Nonlinear control system, Discrete-time system. 
\end{IEEEkeywords}

\IEEEpeerreviewmaketitle

\section{Introduction}

Different variants of accessibility property, being closely related  to controllability, are useful concepts in nonlinear control theory,  whereas in the linear case all of them coincide. 
Roughly speaking,   forward  accessibility from a point $x_0$ concerns the ability of input sequences to move the end-point state in  all directions of some open subset of state space, and backward accessibility from a point $x_0$ concerns the ability of input sequences to move the state from all points of some open subset of the state space toward $x_0$. Accessibility is determined by the rank of the accessibility distribution (see \cite{Jakubczyk90} for discrete-time  systems) computed at the initial state.
 In almost all control methods, it is assumed that the initial state is a regular point of accessibility distribution, meaning that the accessibility distribution has constant rank in a neighborhood of the initial state.  But, checking the validity of this assumption is not a trivial task.
 Specific for nonlinear systems, it  is necessary to distinguish between pointwise accessibility (or shortly accessibility), and generic accessibility (i.e. accessibility from \emph{almost} any point). Various finite criteria exist  for generic accessibility  \cite{Aranda}, \cite{Mullari_TAC}, that require no more  than $n$ steps, where $n$ is the state  dimension. However, for pointwise accessibility, in general, it is  not clear a priori how many steps are needed to distinguish between  the accessibility singular points (from which the system is not  accessible in any finite number of steps) and the points from which  the system is forward accessible in a finite number of steps.   Moreover, it is even unknown whether an upper bound exists for the  number of steps needed. The similar problem exists in the  continuous-time case too \cite{Kawski}.   
While the weaker property of  accessibility in $k$ steps can be checked by a finite test \cite{Mullari_sub}, \cite{Rieger}, the nonlinear system may become accessible from a given point $x$, only when the length of input sequence is larger than some unknown integer $r^*$ that can be far larger than the state space dimension. This is why the accessibility rank condition is usually expressed in terms of an infinite set of  conditions. \cite{Jakubczyk90}, \cite{Mullari_TAC}. 

 \color{black}

\begin{example}\label{ex_Introduction}

Consider the  nonlinear discrete-time system described by 
$ x_1^{\langle 1\rangle}(t)  =  x_2(t)$
, and
 $x_2^{\langle 1\rangle}(t) =  -x_1(t)+x_2(t)+u(t)x_2^2(t)-ux_2(t)$, where the notation ${\langle 1\rangle}$ denotes the value of the variable in the next time instance. By the results of \cite{Aranda} this system is generically   accessible, meaning that starting from \emph{almost} any  initial state,   the set  of possible end point states  in 2 steps  for all input pairs $(u(0),u(1))$ contains non-empty open subsets  of $\mathbb{R}^2$. However, starting from the initial state  $x_0=(0,1)^T$ one can  verify that, in spite of the chosen input pairs,  the trajectory is  $x(1)=(1,1)^T, x(2)=(1,0)^T$. Hence the system is not forward accessible  in one or two steps from  $x_0$. Further computation of  $x(3)=(0,-1)^T$, $x(4)=(-1,-1+2u(3))^T$, shows that in 4 steps, the input sequence affects the end point state only in one direction, hence the system is not forward accessible up to 4 steps. Computing further, we obtain $x(5)=(-1+2u(3),2u(3)+u(4)(2u(3)-1)(2u(3)-2))^T$, which shows  that $x(5)$ can be moved in two independent directions.  To conclude, the system  is not forward accessible from $x_0$ in four steps or less, but eventually it becomes forward accessible in five steps.
\end{example}

The goal of this paper is to   provide a finite test for deciding accessibility (in any number of steps), by finding a  certain integer $r^*$, called accessibility index of the system, such that for all points of state space, examination of accessibility with respect to input sequences of length $r^*$  determines accessibility for any input sequence. To this aim, instead of assessing every individual point of state space separately, we look at the "big picture" of the entire set of accessibility singular points, by assigning certain algebraic sets $S_k$ to the points that are not accessible for inputs of length $k$. Using the language of ideals and varieties, and some geometric properties of the system relying on analyticity and submersivity, we find  the smallest $k$ such that $S_k$ is an invariant set. This $k$ is the accessibility index of the system. As a byproduct, our method also gives the   entire set of accessibility singular points,  using a single chain of strictly ascending chain of ideals.

The ideas, based on ideals and algebraic geometry were
already successfully used in \cite{Bartosiewicz_1995}, \cite{Bartosiewicz_2016}, \cite{Kawano_2013} for characterizing
different obervability properties, in \cite{Kawano_2013_1},  \cite{Nesic_1999}, \cite{Nesic_1998} for
controllability and accessibility of   polynomial systems. In \cite{Nesic_1_1998} a sufficient condition was given for a polynomial system to be  accessible from all points under some assumptions. In  \cite{Kawano_2016}  similar problem was studied and a finite criterion for accessibility from an equilibrium point was introduced.
However, there appears to be a gap in their results on  accessibility (see Remark \ref{remark2}). \\ To the best of our knowledge, our  results provide the first finite test for accessibility in any number of steps, for rational systems and analytic systems on compact semianalytic sets.


The paper is organized as follows. Preliminaries and definitions are given in Section \ref{II}. 
Accessibility criterion, and relation between generic accessibility and pointwise accessibility are obtained in Section \ref{III}.
The main results of the paper are  given in Section  \ref{IV} that investigates separately the cases of  rational systems and analytic systems with compact state space. Appendix presents algorithms that are used in examples.
\color{black}

\section{Preliminaries}\label{II}

Consider the discrete-time nonlinear system
\begin{equation}\label{system}
 x^{\langle 1\rangle}(t)=\Phi(x(t),u(t)),\;\;\;x(t_0)=x_0
\end{equation}
where $x^{\langle 1\rangle}(t):=x(t+1)$ denotes the first order forward shift of the state vector $x(t)$ for any $t\in\Z$, $x(t)\in X\subseteq\R^n$, $u(t)\in U\subseteq\R^m$, and $X,U$  are open subsets, with the standard topology on $\R^n$. Throughout the paper, we assume that the state transition map $\Phi:X \times U\rightarrow X$ is an analytic function. 
 For a matrix with analytic entries, we will use the terms \emph{``generic rank''} and \emph{``maximal rank''}  interchangeably, since for such matrices these two are equal.
The system is assumed to be generically submersive \footnote{The assumption (\ref{submersivity}) is not restrictive, it is a necessary condition for accessibility \cite{Grizzle93}},  which means that
\begin{equation}\label{submersivity}
 {\rm rank}\left[\frac{\partial\Phi(x,u)}{\partial(x,u)}\right]=n
\end{equation}
holds almost everywhere (i.e. except on a zero-measure set) on $X \times U$.

Let $\cal K$ be the field of analytic functions in a finite number of independent variables from the infinite set
$\mathcal{C}=\{x,u^{\langle k\rangle} ,k\geq 0\}$. For a function $\varphi \in \cal K$, the forward shift of $\varphi(x,u,\ldots,u^{\langle k\rangle})$ is defined as
\begin{equation}\label{shift}
 \varphi^{\langle 1\rangle}(x,u,u^{\langle 1\rangle},\ldots,u^{\langle k+1\rangle}):=\varphi(\Phi(x,u),u^{\langle 1\rangle},\ldots,u^{\langle k+1\rangle}).
\end{equation}

Let $U^k=\underbrace{U\times U\times \ldots \times U}_k$ denote the space of control sequences  $\textbf{u}=\{u(0),\ldots,u(k-1)\}$.
For 
initial point $x_0\in X$ and $k\geq1$ we define the map $\Phi_{x_0}^k:U^k\rightarrow X$ as\[  \label{Phi^k}
 \Phi_{x_0}^k (\textbf{u}):=\Phi(\Phi(\ldots(\Phi(x_0,u_0),u_1 )\ldots),u_{k-1}).
\]
The image of the map $\Phi_{x_0}^k$, denoted by $\mathcal{A}_k^+(x_0)$ and called the \emph{forward-attainable set in $k$ steps from $x_0$}, describes the set of states that can be obtained by evaluating the solution of the system (\ref{system}) in $k$ steps from the initial state $x_0$ using all control sequence $\textbf{u}\in U^k$. 
 By
\begin{equation}\label{A+}
 \mathcal{A}^+(x_0)=\bigcup_{k\in\N\ }\mathcal{A}_k^+(x_0)
\end{equation}
we will denote the \emph{forward-attainable set in finite number of steps} from $x_0$ for the system (\ref{system}). Additionally, if $W\subseteq U^k$, then
$ \Phi_{x_0}^k(W):= \bigcup_{\textbf{u}\in W}\Phi_{x_0}^k(\textbf{u})$.

\begin{definition}
 The system (\ref{system}) is said to be \emph{(forward) accessible in $k$ steps from the state $x_0\in X$}, if the interior of the forward-attainable set $\mathcal{A}_k^+(x_0)$ is nonempty. Otherwise it is \emph{non-accessible in $k$ steps from $x_0$}. The system (\ref{system}) is said to be \emph{(forward) accessible from $x_0$} (in finite number of  steps), if ${\rm int}\mathcal{A}^+(x_0)\neq \emptyset$. This system is said to be non-accessible from $x_0$ if it is not accessible from $x_0$.
\end{definition}

\begin{definition}\label{generically forward accessibility}
The system (\ref{system}) is called \emph{accessible (pointwise accessible) from a set $R\subset X$} if it is  accessible from any point of this set.  System (\ref{system}) is called \emph{generically  accessible} if there is an open and dense subset $D$ of $X$ such that the system is  accessible from $D$.
\end{definition}

\begin{definition}
 The (finite) integer $r^*$ is called the \emph{accessibility index} of the system (\ref{system}) over $X$ if $r^*$ is the maximum integer
for which there exists at least one point $x_0 \in X$  such that the system remains  non-accessible  from $x_0$ up to $r^*-1$ steps, but becomes accessible in $r^*$ steps. If there is no such finite integer $r^*$, then $r^* = \infty$.
	\end{definition}
	
	\section{Accessibility Criteria}\label{III}

 Various accessibility criteria can be found in literature  (for example see  \cite{Jakubczyk90},  \cite{Albertini}, \cite{Rieger}), which are more or less similar to the following: \color{black}
\begin{proposition}\cite{Jakubczyk90}\label{Jakubczyk}
 For any fixed  $x_0 \in X$ the interior of the 
  forward-attainable set $\mathcal{A}_k^+(x_0)$ is nonempty if and only if
\[
  {\rm sup}\{{\rm rank}\frac{\partial}{\partial\mathbf{u}}\Phi_{x_0}^k(\mathbf{u}),\;\mathbf{u}\in U^k\}=n
\]
and thus  the  necessary and sufficient condition for accessibility of system (\ref{system}) from $x_0$ is \be \label{rankcondition}
 {\rm sup}\{{\rm rank}\frac{\partial}{\partial\mathbf{u}}\Phi_{x_0}^k(\mathbf{u}),\;\mathbf{u}\in U^k,k\geq 1\}=n.
\ee
\end{proposition}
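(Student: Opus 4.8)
The plan is to deduce both equivalences from two classical facts about the map $\Phi_{x_0}^k\colon U^k\to X\subseteq\R^n$, which is analytic as a finite composition of copies of the analytic map $\Phi$: the \emph{submersion theorem} (a smooth map that is a submersion at a point is open on a neighborhood of that point) and \emph{Sard's theorem} (the set of critical values of a smooth map has Lebesgue measure zero). Since the entries of $\frac{\partial}{\partial\mathbf{u}}\Phi_{x_0}^k$ are analytic in $\mathbf{u}$, the rank of this matrix attains its maximum on an open dense subset of $U^k$; equivalently, the rank is lower semicontinuous, so once the supremum in the statement equals $n$ it already equals $n$ on a nonempty open set. In particular the suprema may be read as maxima.

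First I would establish the equivalence for a fixed $k$. If ${\rm rank}\,\frac{\partial}{\partial\mathbf{u}}\Phi_{x_0}^k(\mathbf{u}^*)=n$ for some $\mathbf{u}^*\in U^k$, then $\Phi_{x_0}^k$ is a submersion at $\mathbf{u}^*$, so by the submersion theorem it maps every neighborhood of $\mathbf{u}^*$ onto a neighborhood of $\Phi_{x_0}^k(\mathbf{u}^*)$; hence $\mathcal{A}_k^+(x_0)$ contains an open set and ${\rm int}\,\mathcal{A}_k^+(x_0)\neq\emptyset$. Conversely, if ${\rm rank}\,\frac{\partial}{\partial\mathbf{u}}\Phi_{x_0}^k(\mathbf{u})<n$ for every $\mathbf{u}\in U^k$, then every point of $U^k$ is a critical point of $\Phi_{x_0}^k$ viewed as a map into $\R^n$, so $\mathcal{A}_k^+(x_0)=\Phi_{x_0}^k(U^k)$ is precisely the set of its critical values; by Sard's theorem it then has Lebesgue measure zero, and in particular empty interior. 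This proves the first ``if and only if''.

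For the ``thus'' part, $(\Leftarrow)$ is immediate: if ${\rm rank}\,\frac{\partial}{\partial\mathbf{u}}\Phi_{x_0}^k(\mathbf{u})=n$ for some $k\geq1$ and some $\mathbf{u}\in U^k$, then ${\rm int}\,\mathcal{A}_k^+(x_0)\neq\emptyset$ by the above, and $\mathcal{A}_k^+(x_0)\subseteq\mathcal{A}^+(x_0)$ gives ${\rm int}\,\mathcal{A}^+(x_0)\neq\emptyset$. For $(\Rightarrow)$ I argue by contraposition: if the rank is strictly less than $n$ for every $k\geq1$ and every $\mathbf{u}\in U^k$, then by the previous paragraph every $\mathcal{A}_k^+(x_0)$ has Lebesgue measure zero, hence so does the countable union $\mathcal{A}^+(x_0)=\bigcup_{k\in\N}\mathcal{A}_k^+(x_0)$ in (\ref{A+}); therefore ${\rm int}\,\mathcal{A}^+(x_0)=\emptyset$ and the system is non-accessible from $x_0$.

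The step I expect to be the only nontrivial one is the passage from ``${\rm int}\,\mathcal{A}_k^+(x_0)=\emptyset$ for every $k$'' to ``${\rm int}\,\mathcal{A}^+(x_0)=\emptyset$'': since a countable union of nowhere-dense sets can be dense, emptiness of interior alone does not propagate through the union. What rescues the argument is the quantitative strengthening supplied by Sard's theorem: empty interior of $\mathcal{A}_k^+(x_0)$ forces rank deficiency of $\frac{\partial}{\partial\mathbf{u}}\Phi_{x_0}^k$ at every point (by the submersion direction), and rank deficiency everywhere forces $\mathcal{A}_k^+(x_0)$ to be Lebesgue-null, a property stable under countable unions. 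I note that this argument uses only that $\Phi$ is sufficiently differentiable; analyticity is needed elsewhere in the paper and enters here solely to guarantee that the suprema over $\mathbf{u}\in U^k$ are attained.
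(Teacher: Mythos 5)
Your proof is correct, and there is in fact no in-paper argument to compare it against: the paper states this proposition as a quoted result from \cite{Jakubczyk90} without proof, so your write-up is a self-contained substitute. The two ingredients you use are the natural ones: the local submersion (open mapping) theorem gives ``rank $n$ at some $\mathbf{u}^*$ $\Rightarrow$ ${\rm int}\,\mathcal{A}_k^+(x_0)\neq\emptyset$'', and Sard's theorem gives the converse by contraposition, since rank deficiency at every $\mathbf{u}\in U^k$ makes $\mathcal{A}_k^+(x_0)$ exactly a set of critical values, hence Lebesgue-null. The passage from fixed $k$ to the union (\ref{A+}) is the genuinely delicate point and you resolve it correctly: empty interior is not preserved under countable unions, but measure zero is. It is worth noting that the paper itself, in the proof of Theorem \ref{theorem1}, handles the analogous passage by invoking the Baire category theorem, which strictly speaking requires the sets $\mathcal{A}_k^+(x)$ (or their closures) to be nowhere dense rather than merely of empty interior; your Sard-based route sidesteps that subtlety and is the cleaner argument. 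Two minor remarks: since the rank is integer-valued and bounded by $n$, the supremum equal to $n$ is automatically attained, so the lower-semicontinuity/analyticity comment is not needed for that point; and Sard's theorem for a $C^r$ map on $U^k\subseteq\R^{mk}$ requires $r\ge mk-n+1$, which grows with $k$, so what actually makes the argument work uniformly in $k$ is that $\Phi$ is $C^\infty$ (here analytic), exactly as you indicate at the end.
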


The first step toward the simplification of checking the infinite set of     conditions (\ref{rankcondition}) is to show that, under the submersivity assumption (\ref{submersivity}), the  non-accessibility in (exactly) $k$ steps  is equivalent to  non-accessibility up to $k$ steps .

\begin{lemma}\label{lemma1}
 Assume that the system (\ref{system}) is generically submersive. If it is  accessible from the state $x_0$ in $k$ steps, then it is  accessible in $r$ steps too, for any integer $r>k$.
\end{lemma}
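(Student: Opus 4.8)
The plan is to prove the single-step statement --- if the system is accessible from $x_0$ in $k$ steps then it is accessible in $k+1$ steps --- and then iterate from $k$ up to $r$. The whole argument hinges on the elementary identity
\[
 \mathcal{A}_{k+1}^+(x_0)=\Phi\bigl(\mathcal{A}_k^+(x_0)\times U\bigr),
\]
which is immediate from $\Phi_{x_0}^{k+1}(\mathbf{u},u_k)=\Phi\bigl(\Phi_{x_0}^k(\mathbf{u}),u_k\bigr)$ together with the fact that $\Phi_{x_0}^k$ sweeps out $\mathcal{A}_k^+(x_0)$ as $\mathbf{u}$ ranges over $U^k$, while $u_k$ ranges independently over $U$.

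First I would unwind the hypothesis via the definition of accessibility in $k$ steps (equivalently, via Proposition \ref{Jakubczyk}): it says that $V:={\rm int}\,\mathcal{A}_k^+(x_0)$ is a nonempty open subset of $X$. Hence $V\times U$ is a nonempty open subset of $X\times U$, and by the identity above $\mathcal{A}_{k+1}^+(x_0)\supseteq\Phi(V\times U)$; so it suffices to show that $\Phi(V\times U)$ has nonempty interior. Here I would invoke generic submersivity (\ref{submersivity}): the set $\Sigma=\{(x,u)\in X\times U:\ {\rm rank}\,[\partial\Phi/\partial(x,u)]=n\}$ is open (lower semicontinuity of the rank of a continuous matrix-valued map) and dense (its complement has measure zero by (\ref{submersivity}), and a measure-zero subset of $\R^{n+m}$ has dense complement). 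Since $V\times U$ is open and nonempty --- in particular of positive Lebesgue measure --- it cannot be contained in the measure-zero complement of $\Sigma$, so it contains some point $(x',u')\in\Sigma$, i.e.\ $\Phi$ is a submersion at $(x',u')$. By the local submersion (open mapping) theorem, $\Phi$ carries every neighborhood of $(x',u')$ onto a neighborhood of $\Phi(x',u')$; as $V\times U$ is such a neighborhood, $\Phi(V\times U)$ contains an open set, whence ${\rm int}\,\mathcal{A}_{k+1}^+(x_0)\neq\emptyset$. Applying this implication repeatedly, from $k$ up to $r$, proves the lemma.

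The only non-formal point --- and hence what I expect to be the main obstacle --- is the step asserting that $V\times U$ must meet the regular set $\Sigma$. This genuinely uses that the singular locus is thin inside $X\times U$ as a subset of $\R^{n+m}$, not merely that it sits over a small set of states $x$: a priori the whole fibre $\{x^*\}\times U$ over a single attainable state $x^*$ could be singular, which is exactly why one must not fix an endpoint but instead work with the full open set $V$ of states attainable in $k$ steps. One could alternatively run the computation through the chain rule, writing $\partial\Phi_{x_0}^{k+1}/\partial(\mathbf{u},u_k)=\bigl[\,(\partial\Phi/\partial x)\,(\partial\Phi_{x_0}^k/\partial\mathbf{u})\ \ \ \partial\Phi/\partial u\,\bigr]$ at a $\mathbf{u}$ where $\partial\Phi_{x_0}^k/\partial\mathbf{u}$ has full row rank $n$, which makes the rank of the left-hand side equal to ${\rm rank}\,[\partial\Phi/\partial(x,u)]$ at the corresponding $(x^*,u_k)$; but that version runs into precisely the same fibre issue and is dispatched in precisely the same way, so I would present the topological version above as the cleaner one.
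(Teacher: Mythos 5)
Your proposal is correct and follows essentially the same route as the paper's proof: decompose $\mathcal{A}_{k+1}^+(x_0)$ as $\Phi(\mathcal{A}_k^+(x_0)\times U)$, use generic submersivity to find a regular point of $\Phi$ inside the open set ${\rm int}\,\mathcal{A}_k^+(x_0)\times U$, apply the local surjectivity/submersion theorem there, and induct. Your measure-theoretic justification that the open set must meet the regular locus simply makes explicit a step the paper leaves implicit.
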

\begin{proof}
 Suppose that the system (\ref{system}) is  accessible from $x_0$ in $k$ steps. Thus, by definition, $\Phi_{x_0}^k(U^k)$ contains an open subset of $X$. First we show that the set $\Phi_{x_0}^{k+1}(U^{k+1})$ contains an open subset of $X$, too. The set $\Phi_{x_0}^{k+1}(U^{k+1})$ can be written as
 \begin{equation}\label{Phi_1}
 \Phi_{x_0}^{k+1}(U^{k+1})=\{\Phi(y,u):\;y\in \Phi_{x_0}^k(U^k)\;\text{and}\;u\in U\}.
 \end{equation}
Since, by assumption, $\Phi_{x_0}^k(U^k)$ and $U$ contain open subsets of $X$ and $U$, respectively, then from the generic submersivity assumption and from local surjectivity theorem \cite{Abraham88}, there exists $y_0\in {\rm int}(\Phi_{x_0}^k(U^k))$ and an input $u_0\in U$ such that the map $\Phi$ is locally surjective around the point $(y_0,u_0)$. This means that, there exists an open neighborhood of $y_0$ denoted as
$\beta(y_0)\subseteq{\rm int}(\Phi_{x_0}^k(U^k))$, such that $\{ \Phi_{y^*}(U),~ y^* \in \beta(y_0) \}$ contains an open subset of $X$. Hence from (\ref{Phi_1}) we conclude that the set $\Phi_{x_0}^{k+1}(U^{k+1})$ contains an open subset of $X$. This means that the system (\ref{system}) is  accessible from $x_0$ in $k+1$ steps. A simple induction establishes the assertion.
\end{proof}

The criteria (\ref{rankcondition}) can be stated in the form of rank of a series of matrices,  similar to the  controllability rank condition for the time-varying linear discrete-time systems \cite{Weiss}. \\
Let
$A(x,{\rm u}):=\frac{\partial\Phi(x,u)}{\partial x}$, $B(x,u):=\frac{\partial\Phi(x,u)}{\partial u}$
and put
\begin{eqnarray}\label{M_k}
 M_1(x,u)  & := & B(x,u), \nonumber\\
 M_k(x,\mathbf{u}) & := & [A^{\langle k-1\rangle}M_{k-1}|B^{\langle k-1\rangle}](x,\mathbf{u}),
\end{eqnarray}
$ k=2,3,4,\ldots$, where forward shifts of respective matrices are defined elementwise according to the rule given by (\ref{shift}). 
 
\begin{lemma}\label{lemma2}
 The system (\ref{system}) is not  accessible up to $k$ steps from a point $x_0\in X$  if and only if
\[
  {\rm rank}(M_k(x_0,\mathbf{u}))<n\;\text{for\;all}\;{\bf u}\in U^k.
\]
\end{lemma}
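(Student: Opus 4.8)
The plan is to show that the matrix $M_k(x_0,\mathbf u)$ is nothing but the Jacobian $\frac{\partial}{\partial\mathbf u}\Phi_{x_0}^k(\mathbf u)$ that appears in Proposition \ref{Jakubczyk}, and then to combine Proposition \ref{Jakubczyk} with Lemma \ref{lemma1}.

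First I would fix $\mathbf u=(u_0,\ldots,u_{k-1})\in U^k$ and introduce the generated trajectory $x_1=\Phi(x_0,u_0),\ x_2=\Phi(x_1,u_1),\ \ldots,\ x_k=\Phi(x_{k-1},u_{k-1})=\Phi_{x_0}^k(\mathbf u)$. From the definition (\ref{shift}) of the forward shift, a straightforward induction on $j$ shows that $A^{\langle j\rangle}$ and $B^{\langle j\rangle}$, evaluated at $(x_0,u_0,\ldots,u_j)$, equal $A(x_j,u_j)=\frac{\partial\Phi}{\partial x}(x_j,u_j)$ and $B(x_j,u_j)=\frac{\partial\Phi}{\partial u}(x_j,u_j)$ respectively. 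Using this together with the chain rule — noting that $x_j$ does not depend on $u_\ell$ for $\ell\ge j$ — a second induction on $k$ applied to the recursion (\ref{M_k}) gives
\[
 M_k(x_0,\mathbf u)=\Big[\,\tfrac{\partial x_k}{\partial u_0}\ \Big|\ \tfrac{\partial x_k}{\partial u_1}\ \Big|\ \cdots\ \Big|\ \tfrac{\partial x_k}{\partial u_{k-1}}\,\Big]=\frac{\partial}{\partial\mathbf u}\Phi_{x_0}^k(\mathbf u)
\]
for every $\mathbf u\in U^k$: the last block column of $M_k$ is $B^{\langle k-1\rangle}=B(x_{k-1},u_{k-1})=\partial x_k/\partial u_{k-1}$, while multiplying the inductive identity $M_{k-1}(x_0,\mathbf u)=\partial x_{k-1}/\partial(u_0,\ldots,u_{k-2})$ on the left by $A^{\langle k-1\rangle}=A(x_{k-1},u_{k-1})$ turns each block $\partial x_{k-1}/\partial u_j$ into $\partial x_k/\partial u_j$, $0\le j\le k-2$.

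Consequently ${\rm rank}\,M_k(x_0,\mathbf u)={\rm rank}\,\frac{\partial}{\partial\mathbf u}\Phi_{x_0}^k(\mathbf u)$ for all $\mathbf u$; since this rank never exceeds $n$, the supremum in Proposition \ref{Jakubczyk} equals $n$ exactly when it is attained, so the system is accessible from $x_0$ in $k$ steps iff ${\rm rank}\,M_k(x_0,\mathbf u)=n$ for at least one $\mathbf u\in U^k$, and hence non-accessible in $k$ steps iff ${\rm rank}\,M_k(x_0,\mathbf u)<n$ for all $\mathbf u\in U^k$. Finally I would pass from ``in $k$ steps'' to ``up to $k$ steps'': one direction is trivial, and for the other, if the system were accessible from $x_0$ in some $j\le k$ steps, then by Lemma \ref{lemma1} it would be accessible in $k$ steps, contrary to assumption; so non-accessibility in $k$ steps is the same as non-accessibility up to $k$ steps, and the asserted equivalence follows.

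I do not expect a genuine obstacle here; the only delicate point is the bookkeeping with iterated forward shifts — verifying that $A^{\langle j\rangle}$, $B^{\langle j\rangle}$ evaluated along the trajectory reproduce precisely the Jacobian blocks at $(x_j,u_j)$ — so that the recursion (\ref{M_k}) really reassembles the Jacobian of the composed map $\Phi_{x_0}^k$ rather than some shifted variant of it. Everything else is the chain rule together with Proposition \ref{Jakubczyk} and Lemma \ref{lemma1}.
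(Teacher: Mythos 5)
Your proposal is correct and follows essentially the same route as the paper: identify $M_k(x_0,\mathbf u)$ with the Jacobian $\frac{\partial}{\partial\mathbf u}\Phi_{x_0}^k(\mathbf u)$ via the chain rule and the shift bookkeeping, then invoke Proposition \ref{Jakubczyk} (the paper's proof is exactly this, stated more tersely). Your only additions are the explicit inductions and the explicit appeal to Lemma \ref{lemma1} to pass from ``in $k$ steps'' to ``up to $k$ steps,'' which the paper leaves implicit.
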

\begin{proof}
 From Proposition \ref{Jakubczyk} it follows that the system (\ref{system}) is  accessible from $x_0$ in $k$ steps if and only if the matrix $\frac{\partial}{\partial {u}}\Phi_{x_0}^k(\textbf{u})$ has generic rank $n$ over all $\textbf{u} \in U^k$. 
 Differentiating the map $\Phi_{x_0}^k(\textbf{u})$ and using the chain rule, we obtain
 $ \frac{\partial}{\partial \mathbf{u}}\Phi_{x_0}^k(\mathbf{u})
                                                              = [A^{\langle k-1\rangle}A^{\langle k-2\rangle}\ldots A^{\langle 1\rangle}B|
                                                                A^{\langle k-1\rangle}A^{\langle k-2\rangle}\ldots A^{\langle 2\rangle}B^{\langle 1\rangle}| \ldots\\
                                                          \ldots|
                                                                A^{\langle k-1\rangle}B^{\langle k-2\rangle}|B^{\langle k-1\rangle}](x_0,\mathbf{u})
                                                          =  M_k(x_0,\mathbf{u})$,
which establishes the claim.
\end{proof}

Before proceeding to the main result of the paper, we single  out  systems that are not generically accessible, by showing that  such systems are non-accessible anywhere.
\begin{theorem}\label{theorem1}
The system (\ref{system}) is generically  accessible if and only if  $M_{n}(x,\mathbf{u})$ has generic rank $n$ on $X\times U^{n}$. Moreover, if the system (\ref{system}) is not generically accessible, then it is  non-accessible from any $x$ and for any number of steps. 
\end{theorem}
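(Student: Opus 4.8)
The plan is to reduce the two assertions to a single algebraic observation: for a matrix with analytic entries the generic rank is the maximal rank attained, and this maximum can be read off from one symbolic matrix. First I would address the equivalence. By Lemma~\ref{lemma2} combined with Proposition~\ref{Jakubczyk}, generic accessibility is the statement that for each $x$ in an open dense subset $D \subseteq X$ there exists some $k$ with $\operatorname{rank} M_k(x,\mathbf{u}) = n$ for generic $\mathbf{u}$. The key claim is that it suffices to take $k = n$ uniformly. For the forward direction, suppose $M_n(x,\mathbf{u})$ does \emph{not} have generic rank $n$ on $X \times U^n$; since its entries are analytic, the set where $\operatorname{rank} M_n < n$ is then all of $X \times U^n$, i.e. every $n \times n$ minor of $M_n$ vanishes identically. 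I would then show that the same holds for $M_k$ for every $k \geq n$: because of the recursive block structure $M_k = [A^{\langle k-1\rangle} M_{k-1} \mid B^{\langle k-1\rangle}]$, each column block of $M_k$ lies in the column space of the union of shifted copies of $A$-images of $B$, and a Cayley--Hamilton-type argument (applied to the shifted matrices $A^{\langle j\rangle}$, whose characteristic polynomial has degree $n$) shows that no genuinely new directions appear past step $n$. Hence if $M_n$ is rank-deficient everywhere, so is every $M_k$, and by Lemma~\ref{lemma2} the system is non-accessible from every $x$ and for every number of steps. This simultaneously proves the "only if'' part of the equivalence \emph{and} the second sentence of the theorem. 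Conversely, if $M_n(x,\mathbf{u})$ has generic rank $n$, then Lemma~\ref{lemma2} with $k = n$ gives accessibility in $n$ steps from every $x$ outside the proper analytic subset where the relevant minors vanish; that subset has empty interior, so its complement is open and dense, giving generic accessibility.

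The technical heart is the stabilization claim: \emph{if every $n\times n$ minor of $M_n$ vanishes identically on $X\times U^n$, then every $n\times n$ minor of $M_k$ vanishes identically for all $k>n$.} Here I would work over the field $\mathcal{K}$ of analytic functions introduced in Section~\ref{II}, where "generic rank'' becomes ordinary rank of a matrix over a field. Let $r = \operatorname{rank}_{\mathcal K} M_n < n$. From the block recursion, $\operatorname{rank}_{\mathcal K} M_{k}$ is nondecreasing in $k$, so it suffices to rule out a strict increase at any single step $k \geq n$. Consider the column space $V_k$ of $M_k$ over $\mathcal K$: it is the span of the columns of $A^{\langle k-1\rangle}\cdots A^{\langle j+1\rangle} B^{\langle j\rangle}$ for $j = 0,\dots,k-1$. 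Applying the Cayley--Hamilton theorem to $A^{\langle k-1\rangle}$ (degree-$n$ characteristic polynomial with coefficients in $\mathcal K$) lets me rewrite the longest products in terms of the shorter ones, which is exactly the mechanism by which the rank of a "Kalman-type'' controllability matrix stabilizes at $n$ columns' worth of blocks. The bookkeeping needed is that the shifts $\langle \cdot \rangle$ commute appropriately with the recursion — this is where one must be a little careful, since the $A$ and $B$ blocks are shifted by different amounts — but the structure is the standard time-varying controllability matrix structure (as the paper notes, cf.\ \cite{Weiss}), so the argument goes through.

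The main obstacle I anticipate is making the Cayley--Hamilton / stabilization step fully rigorous in the time-\emph{varying} setting: unlike the classical Kalman rank condition for a single pair $(A,B)$, here the blocks $A^{\langle j\rangle}$ vary with $j$, so one cannot simply invoke "$n$ iterations suffice'' as a black box. The correct formulation is that the chain of column spaces $V_1 \subseteq V_2 \subseteq \cdots$ (suitably defined after pulling everything back by forward shifts to a common argument) cannot strictly ascend more than $n$ times, because each $V_k$ is a subspace of the $n$-dimensional $\mathcal K^n$; therefore if $\operatorname{rank} M_n = \operatorname{rank} M_{n}$ already failed to reach $n$, rank-deficiency everywhere forces, via the submersivity-free purely linear-algebraic recursion, that $\operatorname{rank} M_{n+1}$ cannot jump to $n$ either without $\operatorname{rank} M_n$ having been $n-1$ and then needing a further jump — a counting argument capped at $n$. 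Once this is set up correctly the rest is routine: vanishing of all maximal minors of $M_n$ is a closed analytic condition equivalent to rank $<n$ over $\mathcal K$, and Lemma~\ref{lemma2} translates it directly into non-accessibility at every point and for every horizon.
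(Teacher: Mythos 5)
Your overall strategy mirrors the paper's (reduce both claims to the generic rank of $M_n$ over the field $\mathcal{K}$, then invoke Proposition \ref{Jakubczyk} and Lemma \ref{lemma2}), but the step you yourself call the technical heart --- that generic rank deficiency of $M_n$ forces generic rank deficiency of every $M_k$, $k>n$ --- is not actually established, and the two devices you offer for it do not work as stated. Cayley--Hamilton applied to the single matrix $A^{\langle k-1\rangle}$ only reduces powers of that one matrix, whereas the column blocks of $M_k$ are products $A^{\langle k-1\rangle}A^{\langle k-2\rangle}\cdots A^{\langle j+1\rangle}B^{\langle j\rangle}$ of \emph{distinct} shifted matrices, so no ``longest product'' can be rewritten that way. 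Your fallback counting argument is also not sound: the column spaces of the $M_k$ are not nested (the old columns reappear multiplied by $A^{\langle k-1\rangle}$ and with shifted arguments, next to a new block), and even after repairing nestedness the bound $\dim_{\mathcal K}\mathcal{K}^n=n$ only tells you that a strictly ascending chain must stop; it does not give the implication ``once the rank stops growing below $n$, it never grows again,'' which is precisely the nontrivial content in this time-varying, shifted setting. The paper fills exactly this hole with a concrete shift-and-multiply computation: from rank deficiency of $M_n$ it extracts a dependence $B^{\langle n-1\rangle}=A^{\langle n-1\rangle}M_{n-1}P$ with $P\in\mathcal{K}^n$ (equation (\ref{KH})), applies the forward shift --- which respects such identities, since it is defined by composition with $\Phi$ as in (\ref{shift}) --- to obtain (\ref{KH1}), and premultiplies (\ref{KH}) by $A^{\langle n\rangle}$ to obtain (\ref{KH2}); together these exhibit the two newest column blocks of $M_{n+1}$ as $\mathcal{K}$-combinations of the earlier ones, and induction finishes. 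Some explicit mechanism of this kind, propagating the dependency relation through the shift, is what your sketch is missing.

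A second, smaller omission: from ``every $M_k$ has all $n\times n$ minors identically zero'' Lemma \ref{lemma2} yields only non-accessibility up to $k$ steps for each finite $k$, i.e.\ each $\mathcal{A}_k^+(x)$ has empty interior. With the paper's definitions, the theorem's conclusion also requires that $\mathcal{A}^+(x)=\bigcup_k\mathcal{A}_k^+(x)$ have empty interior, and a countable union of sets with empty interior need not have empty interior; the paper closes this with a Baire-category argument applied to (\ref{A+}). Your proposal stops one step short of that.
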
 
\begin{proof}
For simplicity, we prove the theorem for the single input case. If the generic rank of $M_{n}$ becomes $n$, then the system (\ref{system}) is accessible from almost all points, in $n$ steps, and more,  via Lemma \ref{lemma1}. 

 For the converse, we use  a deduction similar to the proof of  the Cayley-Hamilton theorem. Assume that $M_n$ has generic rank less than $n$. Then, considering (\ref{M_k}), there exists a vector $P=[p_1,…,p_n ]^T \in {\cal K }^n$ such that
 \be \label{KH} B^{\la n-1\ra} =A^{\la n-1\ra }  M_{n-1} P\ee 
which, after forward shifting, becomes
\be \label{KH1} B^{\la n \ra} =A^{\la n \ra}  M_{n-1}^{\la 1\ra}  P^{\la 1\ra}.\ee 
Multiplying both sides of (\ref{KH}) by $A^{\la n \ra}$  gives
\be  \label{KH2} A^{\la n \ra}  B^{\la n-1\ra} =A^{\la n \ra }  A^{\la n-1\ra } M_{n-1} P. \ee
Now, note that the left hand sides of (\ref{KH1}) and  (\ref{KH2})  are the last two columns of $M_{n+1}$, while the right hand sides of (\ref{KH1}) and  (\ref{KH2}) are linear combinations of the first $n-1$ columns of $M_{n+1}$, thus the generic rank of $M_{n+1}$ is less than  $n$. A simple induction proves that also all $M_k$ for $ k\geq n$ has  generic rank less than $n$. Because the generic rank of a  matrix with analytic entries is  its maximum rank,  from Proposition \ref{Jakubczyk} it follows that $\mathcal{A} _k^+ (x)$ has empty interior for all $k$ and all $x\in X$. In this case, from (\ref{A+})  using  Baire category theorem \cite{Abraham88} we conclude that $\mathcal{A}^+ (x)$ has empty interior for every $x\in X$.
\end{proof} \color{black}

Theorem \ref{theorem1} shows that generic accessibility of the system is necessary for  accessibility from any point $x_0$.  This property can be checked by a (finite) rank test on the matrix $M_{n}(x,\mathbf{u})$. Therefore, in the rest of the paper, we assume that the system is generically accessible.

Denote by $S_0:=X$ and by $S_k\subseteq X$ the sets of all states from which the system (\ref{system}) is not accessible up to $k$ steps. It means that for any $x_0\in S_k$, sets $\Phi_{x_0}^i(U^i)$ for $1\leq i\leq k$ do not contain an open subset of $X$. Additionally, by $S_\infty$ we denote the set of points from which system is not accessible in any finite number of steps. These points are called \emph{accessibility singular points}.
Thanks to Lemma \ref{lemma1}, we have the following descending chain of subsets:
\begin{equation}\label{descending_seq}
 S_n\supseteq S_{n+1}\supseteq S_{n+2}\supseteq \ldots \supseteq S_\infty.
\end{equation}
By construction, we obtain that $\Phi(S_{k+1},U)\subseteq S_k$.



Based on Theorem \ref{theorem1}, for an analytic system (\ref{system}) that is not generically accessible it holds that $S_\infty=X$, i.e., the system (\ref{system}) is   non-accessible from all points of $X$.  \\
\color{black}

\section{Finiteness conditions of accessibility}\label{IV}
Let us recall some basic facts from ideal theory. An \emph{ideal} $I$ of a commutative ring $\mathcal{P}$ is a subset of $\mathcal{P}$ with the properties that if $p\in\mathcal{P}$ and $a,b\in I$ then $a+b\in I$ and $ap\in I$. The  \emph{radical} of an ideal $I$ of $\mathcal{P}$, denoted by $\sqrt{I}$, is the set $\{p\in\mathcal{P}:\;p^n\in I\;\text{for\;some}\;n\in\N\}$. If $I$ coincides with its own radical, then $I$ is called a \emph{radical ideal}. The \emph{real radical} of $I$, denoted by $\sqrt[\R]{I}$, is the set of all $p\in\mathcal{P}$ for which there are natural $m,k$ and $q_1,\ldots,q_k\in\mathcal{P}$ such that $p^{2m}+\sum_i^kq_i^2\in I$. If $I,J$ are ideals of $\mathcal{P}$, then $(i)$ the real radical of $I$ is an ideal of $\mathcal{P}$, $(ii)$ $I\subseteq\sqrt{I}\subseteq\sqrt[\R]{I}$, $(iii)$ if $I\subset J$ then $\sqrt[\R]{I}\subseteq\sqrt[\R]{J}$, see \cite{Bochnak}.
A semialgebraic (respectively semianalytic) set $X$ is a set such that for every $x\in X$ there is an open neighborhood $V$ of $x$ with property that $X\cap V$ is a finite Boolean combination of sets $\{\overline{x}\in V:\varphi(\overline{x})=0\}$ and $\{\overline{x}\in V:\psi(\overline{x})>0\}$ where $\varphi,\psi:V\rightarrow\R$ are polynomial functions  (respectively  analytic functions). For a set $A \subset X$, the Zariski closure of $A$  is defined as the smallest algebraic variety containing A,  and is denoted by $\overline{A}$.

To study further forward accessibility of system (\ref{system}) we  need additional assumptions. We will address two cases:\\
\textit{A.} function $\Phi$ is rational defined on $X\times U$ \\
\textit{B.} function $\Phi$ is analytic defined on $X\times U$ with a compact semianalytic $X$.


\subsection{Rational case}
Let $\R[x]=\R[x_1,\ldots,x_n]$ be the commutative ring of polynomials in $x$. Recall that for an ideal $I$ of $\R[x]$ its \emph{zero-set} is defined as $\mathcal{Z}(I):=\{x\in\R^n:p(x)=0\;\text{for\;all}\;p\in I\},$ i.e. $\mathcal{Z}(I)$ is the set of common zeroes of all polynomials in $I$. If $\mathfrak{Z}\subset\R^n$, then the \emph{zero-ideal} of $\mathfrak{Z}$, denoted by $\mathcal{I}(\mathfrak{Z})$, is defined as the set
$\mathcal{I}(\mathfrak{Z})=\{p\in\R[x]:p(x)=0\;\text{for\;all}\;x\in \mathfrak{Z}\}.$
If $I$ is an ideal of $\R[x]$ and $\mathfrak{Z}\in\R^n$, then $(i)$ $\mathcal{I}(\mathcal{Z}(I))=\sqrt[\R]{I}$, $(ii)$ $\mathfrak{Z}\subseteq \mathcal{Z}(\mathcal{I}(\mathfrak{Z}))$, $(iii)$ $I\subseteq \mathcal{I}(\mathcal{Z}(I))$, $(iv)$ $\mathcal{Z}(\mathcal{I}(\mathcal{Z}(I)))=\mathcal{Z}(I)$ and $(v)$ $\mathcal{I}(\mathcal{Z}(\mathcal{I}(\mathfrak{Z})))=\mathcal{I}(\mathfrak{Z})$, see \cite{Bochnak}.

\begin{theorem}\label{theorem2}
 Assume that the system (\ref{system}) is rational and generically accessible. Then under submersivity assumption (\ref{submersivity})
\begin{enumerate}
\item [(i)] The descending chain of sets (\ref{descending_seq}) eventually stabilizes.
\item [(ii)] The inclusion relation in  the descending chain of sets (\ref{descending_seq}) before stabilization  is strict.
\item [(iii)] The system has  a finite accessibility index $r^*$, where  $r^*$ is the smallest integer such that $S_{r^*}=S_{r^* +1}$.
\end{enumerate}
\end{theorem}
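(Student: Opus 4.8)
\emph{Plan.} The whole argument will run through the ascending chain condition in $\R[x]$, applied to the zero-ideals of the sets $S_k$. The first task is to see that, for rational $\Phi$, each $S_k$ is the trace on $X$ of an affine variety. By Lemma \ref{lemma2}, $x_0\in S_k$ iff ${\rm rank}\,M_k(x_0,\mathbf u)<n$ for every $\mathbf u\in U^k$, i.e. iff every $n\times n$ minor of $M_k(x_0,\cdot)$ vanishes identically in $\mathbf u$. Since $\Phi$ is rational and defined on $X\times U$ (so that the iterated shifts occurring in $M_k$ have denominators nonvanishing on $X\times U^k$), each such minor is a rational function whose numerator is a polynomial in $(x_0,\mathbf u)$; because $U^k$ is open, hence Zariski dense in $\R^{mk}$, identical vanishing in $\mathbf u$ is equivalent to the vanishing at $x_0$ of every coefficient of that numerator regarded as a polynomial in $\mathbf u$, and after clearing denominators these coefficients are polynomials in $x_0$. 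Collecting them over all minors gives an ideal $I_k\subseteq\R[x]$ with $S_k=X\cap\mathcal Z(I_k)$; in particular $S_k$ is relatively Zariski closed in $X$, so $S_k=X\cap\mathcal Z(\mathcal I(S_k))$.

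\emph{Part (i).} Lemma \ref{lemma1} gives the descending chain (\ref{descending_seq}), hence the ascending chain of (real radical) zero-ideals $\mathcal I(S_n)\subseteq\mathcal I(S_{n+1})\subseteq\cdots$ in $\R[x]$. By Hilbert's basis theorem $\R[x]$ is Noetherian, so this chain stabilizes. Applying $\mathcal Z(\cdot)$ and intersecting with $X$, and using $S_k=X\cap\mathcal Z(\mathcal I(S_k))$, the sets $S_k$ stabilize as well, which is (i).

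\emph{The permanence lemma, and Parts (ii)--(iii).} The core of (ii) and (iii) is the claim that the chain (\ref{descending_seq}) cannot have a \emph{plateau followed by a later drop}: the first index at which it stalls is already the index from which it is constant, and equals $S_\infty$. To see this, suppose $S_k=S_{k+1}$ for some $k\ge n$. From $\Phi(S_{k+1},U)\subseteq S_k$ and $S_{k+1}=S_k$ we get $\Phi(S_k,U)\subseteq S_k$, so $S_k$ is forward invariant under $\Phi$; hence for every $x_0\in S_k$ the attainable set $\mathcal A^+(x_0)=\bigcup_j\Phi_{x_0}^j(U^j)$ is contained in $S_k$. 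But generic accessibility, via Theorem \ref{theorem1} (genericity of ${\rm rank}\,M_n=n$), forces $S_n$, and therefore every $S_j$ with $j\ge n$, to be nowhere dense in $X$. So $\mathcal A^+(x_0)$ has empty interior, i.e. each $\Phi_{x_0}^j(U^j)$ has empty interior, i.e. $x_0\in S_j$ for all $j$, i.e. $x_0\in S_\infty$. Thus $S_k\subseteq S_\infty$, and since always $S_\infty\subseteq S_k$, we get $S_k=S_\infty=S_{k+1}=S_{k+2}=\cdots$. Now let $\tilde r$ be the smallest $k\ge n$ with $S_k=S_{k+1}$ (finite by (i)). By minimality $S_k\supsetneq S_{k+1}$ for $n\le k<\tilde r$, which is (ii). For (iii): $r^*$ is, by definition, the largest $k$ with $S_{k-1}\supsetneq S_k$; the permanence lemma gives $S_{k-1}=S_k=S_\infty$ for every $k>\tilde r$, while $S_{\tilde r-1}\supsetneq S_{\tilde r}$, so $r^*=\tilde r$, which is finite and equals the smallest integer with $S_{r^*}=S_{r^*+1}$.

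\emph{Main obstacle.} The delicate point is the permanence lemma, i.e. excluding plateaus before final stabilization. The tempting shortcut $S_{k+1}=\{x\in S_k:\Phi(x,U)\subseteq S_k\}$ would make permanence immediate, but it is false in general: there is an off-by-one coming from the block $B(x,u)$ of $M_{k+1}$ being pushed forward by the dynamics, so one genuinely needs the combination of forward invariance of $S_k$ with the nowhere-density of $S_k$ for $k\ge n$ (the latter being exactly what generic accessibility buys through Theorem \ref{theorem1}). A secondary, more routine, technical point is making the ``$S_k$ is Zariski closed in $X$'' reduction fully rigorous: controlling the denominators of the iterated rational maps on $X\times U^k$ and invoking Zariski density of $U^k$ in $\R^{mk}$.
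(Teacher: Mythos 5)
Your proposal is correct and follows essentially the same route as the paper: show each $S_k$ is (the trace on $X$ of) an algebraic set via the coefficient polynomials of the minors of $M_k$, get stabilization of the chain from Noetherianity of $\R[x]$, and then use forward invariance of a plateau $S_k=S_{k+1}$ together with the smallness of $S_k$ (your ``nowhere dense'', the paper's ``zero measure'', both supplied by generic accessibility) to conclude that the first plateau already equals $S_\infty$, which yields strictness and the identification of $r^*$. The only cosmetic differences are that you phrase stabilization through the ascending chain of zero-ideals rather than directly through the descending chain of varieties, and you extract the coefficient conditions via Zariski density of $U^k$ rather than analyticity; both are equivalent to the paper's steps.
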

\begin{proof}
 Let $\R_x [\mathbf{u}]$ be the commutative ring of polynomials in $\mathbf{u}=\{u_0,\ldots,u_{k-1}\}$ with coefficients in the ring $\R[x]$.
 For a fixed $k>n$, let $i_k$ be the number of all $n \times n$  submatrices of $M_k(x,\mathbf{u})$. Denote by $p_{k,1},\ldots,,p_{k,i_k}$ the numerator of determinants of these submatrices as polynomials in $R_x [\mathbf{u}]$. So, each $p_{k,i}$ can be written as
 \begin{equation}\label{p}
   p_{k,i}(x,\mathbf{u})=\sum_{j=1}^{j_{k,i}}a_{k,i,j}(x)b_{k,i,j}(\mathbf{u})
 \end{equation}
 where each $b_{k,i,j}(\mathbf{u})$ is a monomial in $\mathbf{u}$ with coefficient $a_{k,i,j}(x)$ in $ \mathbb{R}[x]$. The index $j_{k,i}$ in (\ref{p}) denotes the number of distinct monomials in $\mathbf{u}$.

 By analyticity and using Lemma \ref{lemma1}, a point $x_0\in X$ belongs to the set $S_k$ if and only if $a_{k,i,j}(x_0)=0$ for all $1\leq i\leq i_k$ and $1\leq j\leq j_{k,i}$. Thus every set $S_k$ is an algebraic set in  $X$. Since the dimension of $X$ is finite, the descending chain of algebraic sets (\ref{descending_seq}) stabilizes after finitely many steps \cite{Lang}, i.e. there exists an integer $r^*$ such that
 $ S_{n}\supseteq S_{n+1}\supseteq \ldots \supseteq S_{r^*}=S_{r^*+1}=\ldots=S_\infty.$  Hence the part (i) is proven.

 Assume that the integer $l\geq n$ is the smallest integer such that $S_l=S_{l+1}$. Since  $\Phi(S_{l+1},U)\subseteq S_l$ for any $\mathbf{u}\in U^k$, equality $S_{l+1}=S_l$ implies that $\Phi(S_l,U)\subseteq S_l$, i.e. $S_l$ is closed with respect to the multivalued map $\Phi(\cdot,U)$. Now, by assuming generic accessibility, $S_l$ is a zero measure set. Hence the image of a point $x\in S_l$ under any sequence of inputs from $U^k$, belongs to a set of measure zero. Thus all points $x\in S_l$ are among accessibility singular points, or say, $S_l\subset S_\infty$. Obviously, from Lemma \ref{lemma1} we have $S_\infty \subset S_l$. Thus $S_l=S_{r^* }=S_\infty$ and we have the following strict inclusion property
 $ S_{n}\supsetneq S_{n+1}\supsetneq \ldots \supsetneq S_{r^*}=S_{r^*+1}=\ldots=S_\infty$, which proves the part (ii) of the theorem. 
 Part (iii) is a trivial conclusion of (ii). 
\end{proof}
As a consequence of  Theorem \ref{theorem2} we have the ascending chain
\[
 I_{n}\subsetneq I_{n+1}\subsetneq\ldots\subsetneq I_{r^*}=I_{r^*+1}=\cdots
\]
  of ideals in $\R[x]$
 where $I_{k}= \mathcal{I}(S_{k})$. Let $I_{M_k}:=\langle a_{k,1,1},\ldots,a_{k,i_k,j_{k,{i_k}}}\rangle$  where $\langle a_{k,1,1},\ldots,a_{k,i_k,j_{k,{i_k}}}\rangle$ denotes the ideal generated by $a_{k,1,1},\ldots,a_{k,i_k,j_{k,{i_k}}}$, and $a_{k,i_k,j_{k,i_k}}$ are as in (\ref{p}). Property $\mathcal{I}(\mathcal{Z}(I))=\sqrt[\R]{I}$ and Lemma \ref{lemma1} assures that
$ I_k=\sqrt[\R]{I_{M_k}}$. This allows the usage of computer algebra tools for  obtaining the $S_\infty$ and respectively the accessibility index $r^*$, by checking equality of ideals. Namely, $r^*$ is the smallest integer such that $I_{r^*}=I_{r^*+1}$.  Algorithm 1 in Appendix is  based on Theorem  \ref{theorem2}.

 Computation of ideals $I_k$ involves computation of real radical ideals, which is computationally challenging. So we introduce an alternative approach. 
 
Consider another chain of ideals associated with the system (\ref{system}) that satisfies conditions of Theorem \ref{theorem2}. Namely, consider the ideals
 $\overline{I}_k :=  \bigcup_{l=n}^k I_{M_l}$
 for any $k\geq n$. By construction, we have the ascending chain
 \be \label{14}
  \overline{I}_{n}\subseteq \overline{I}_{n+1}\subseteq\ldots 
  \ee
   of ideals in $\R[x]$.


\begin{theorem}\label{theorem12}
  Assume that the system (\ref{system}) is  rational and generically accessible. Under submersivity assumption (\ref{submersivity}), there exists a finite integer $\kappa$ such that $\overline{I}_\kappa=\overline{I}_{\kappa+1}$. Moreover, $r^* \leq \kappa$ and $\mathcal{Z}(\overline{I}_\kappa)=S_{\infty}$.
\end{theorem}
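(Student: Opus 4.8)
The plan is to reduce Theorem~\ref{theorem12} to Theorem~\ref{theorem2} together with the Hilbert basis theorem, the bridge being a dictionary between the ideals $\overline I_k$ and the sets $S_k$. First I would record that dictionary. Reading $\overline I_k$ as the ideal $\sum_{l=n}^{k} I_{M_l}$ of $\R[x]$ generated by the union, and recalling from the proof of Theorem~\ref{theorem2} that $x_0\in S_l$ holds exactly when every generator $a_{l,i,j}$ of $I_{M_l}$ vanishes at $x_0$ (this is just the remark that a polynomial in $\mathbf u$ with coefficients in $\R[x]$ vanishes identically on the open set $U^l$ iff all its coefficients vanish, combined with Lemmas~\ref{lemma1} and~\ref{lemma2}), one obtains $\mathcal Z(I_{M_l})=S_l$ for every $l\ge n$. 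Since the $S_l$ are nested by (\ref{descending_seq}), this gives, for all $k\ge n$,
\[
 \mathcal Z(\overline I_k)=\bigcap_{l=n}^{k}\mathcal Z(I_{M_l})=\bigcap_{l=n}^{k}S_l=S_k .
\]

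Next, the ring $\R[x]$ is Noetherian, so the ascending chain (\ref{14}) stabilizes after finitely many steps; in particular there is a finite $\kappa$ with $\overline I_\kappa=\overline I_{\kappa+1}$, which is the first assertion. Applying $\mathcal Z$ to this equality and using the dictionary yields $S_\kappa=S_{\kappa+1}$. Now Theorem~\ref{theorem2}(ii) says the inclusions in (\ref{descending_seq}) are strict up to the first index $r^*$ at which the chain stabilizes, and Theorem~\ref{theorem2} (with its proof) identifies $S_{r^*}$ with $S_\infty$; hence any index $\kappa$ at which $S_\kappa=S_{\kappa+1}$ must satisfy $\kappa\ge r^*$, that is $r^*\le\kappa$, and moreover $S_\kappa=S_{r^*}=S_\infty$. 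Combining with the dictionary, $\mathcal Z(\overline I_\kappa)=S_\kappa=S_\infty$, which is what we want.

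The argument is short once the identity $\mathcal Z(\overline I_k)=S_k$ is in place, so that identity is the only real (and fairly mild) obstacle — essentially a bookkeeping lemma translating Lemma~\ref{lemma2} into the vanishing of the explicit generators $a_{l,i,j}$. The subtlety worth stressing, and the reason the conclusion reads $r^*\le\kappa$ rather than $r^*=\kappa$, is that the $\overline I_k$ need not be radical, so the ideal chain (\ref{14}) might in principle keep growing strictly after the varieties $\mathcal Z(\overline I_k)$ have already frozen at $S_\infty$; this causes no harm, because we only use the one-way implication $\overline I_\kappa=\overline I_{\kappa+1}\Rightarrow S_\kappa=S_{\kappa+1}$ and then invoke the strict-descent part of Theorem~\ref{theorem2} to force $\kappa\ge r^*$. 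This is also precisely what makes it legitimate to work with the easily computed $\overline I_k$ in place of the real radicals $I_k$.
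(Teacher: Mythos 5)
Your proposal is correct and follows essentially the same route as the paper: Noetherianity of $\R[x]$ (Hilbert basis theorem) yields the stabilizing index $\kappa$, the identity $\mathcal{Z}(\overline{I}_k)=\bigcap_{l=n}^{k}S_l=S_k$ transfers $\overline{I}_\kappa=\overline{I}_{\kappa+1}$ to $S_\kappa=S_{\kappa+1}$, and Theorem~\ref{theorem2} then gives $r^*\le\kappa$ and $\mathcal{Z}(\overline{I}_\kappa)=S_\infty$. Your added remark on why only $r^*\le\kappa$ (the $\overline{I}_k$ need not be radical) is a correct and useful clarification, but not a departure from the paper's argument.
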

\begin{proof}

Since $\R[x]$ is a Noetherian ring, from the Hilbert Basis Theorem \cite{Cox} it follows that there exists some $\kappa$ such that $\overline{I}_\kappa=\overline{I}_{\kappa+1}$. 
 Now, from the proof of Theorem \ref{theorem2} $S_k=\mathcal{Z}\langle a_{k,1,1},\ldots,a_{k,i_k,j_{k,{i_k}}}\rangle $. From the fact that
 $\mathcal{Z}(\bigcup_i I_i)=\bigcap_i\mathcal{Z}(I_i)$ (see \cite{Cox}),
 definition of sets $S_k$ and Theorem \ref{theorem2} it follows that
  $\mathcal{Z}(\overline{I}_k)=\bigcap_{l=k^*}^k S_l=S_k$.
Hence, $\mathcal{Z}(\overline{I}_k)=S_k=\mathcal{Z}(I_k)$.
 Then from  $\overline{I}_\kappa=\overline{I}_{\kappa+1}$ we obtain $S_\kappa=S_{\kappa+1}$, which using Theorem \ref{theorem2} gives $\mathcal{Z}(\overline{I}_\kappa)=S_{\infty}$. Since $r^*$ is the smallest integer such that $S_{r^*}=S_{r^*+1}$, therefore from $S_\kappa=S_{\kappa+1}$ we have $r^* \leq \kappa$.
\end{proof}

The (finite) integer $\kappa$ in Theorem \ref{theorem12} serves as an upper bound for the accessibility index $r^*$, while its computation does not involve the real radical ideals, and therefore it may be easier to compute.  Algorithm 2 in Appendix is based on Theorem \ref{theorem12}.

\begin{example}
 Let us consider a current-controlled coil, rotating in a homogeneous magnetic field, as described in \cite{Mullari_sub}. The  system equations after  Euler's discretization with step $T$           
   is
 \begin{equation}\label{system_after_dyscretization}
  x_1^{\langle 1\rangle}  =  x_1+Tx_2,\;\;\;
  x_2^{\langle 1\rangle}  =  x_2+T(ax_1u-bx_2)
 \end{equation}
where $x_1$ and $x_2$ denote the rotational angle and the angular velocity of the coil, respectively. It is easy to see that system (\ref{system_after_dyscretization}) is generically accessible (in 2 steps), and  $I_{M_2}=\langle x_1(x_1+Tx_2)\rangle$, shows that from the points of $\mathcal{V}(I_{M_2})$ the system (\ref{system_after_dyscretization}) is not accessible in 2 steps. 
 Using Algorithm 2, we obtain $\bar{I}_2=I_{M_2}=\langle x_1(x_1+Tx_2)\rangle, ~ \bar{I}_3=\langle x_1 , x_2 \rangle$, and $\bar{I}_4=\bar{I}_3 $. Therefore, according to the Algorithm 2,  $\mathcal{V}(\bar{I}_3)=(0,0)$ is the only non-accessibility singular point, and accessibility index of the system is $r^*=3$.
\end{example}

{\color{black}
 The following example shows how to compute accessibility index and the set of accessibility singular points for a rational system.}
\begin{example}
 Consider the following rational system
 \begin{equation}\label{ex_19}
  x_1^{\langle 1\rangle}  =  \frac{x_2}{u+x_1}, \;\;\;\;
  x_2^{\langle 1\rangle}  =  x_1+x_2.
 \end{equation}
  One can easily verify that this system is generically accessible. 
  If we go through the steps of Algorithm 1, firstly we obtain  $\sqrt[\R]{I_{M_2}}=\la x_2(x_1+x_2)\ra$. Thus $S_2=\{x\in\R^2: x_2=0\;\text{or}\;\;x_1+x_2=0\}$ is the set of points that are non-accessible up to two steps. Next, $I_3=\sqrt[\R]{I_{M_3}}=\la x_1,x_2\ra$, so $S_3=\{x\in\R^2: x_2=0,\;x_1=0\}=\{(0,0)\}$ is the set of points that are non-accessible up to three steps. By computing $I_4=\sqrt[\R]{I_{M_4}}=\la x_1,x_2\ra$ we obtain that $I_4=I_3$. So, according to Theorem \ref{theorem12}, $S_3=S_4=S_\infty=\{(0,0)\}$, and accessibility index is $r^*=3$.
\end{example}
\begin{remark} \label{remark17} 
The problem of  backward accessibility can be tackled using an approach similar to that taken in this paper for the forward accessibility. System (\ref{system}) is called \emph{ backward accessible} to a point $x_0 \in X$ if the set of points in $X$ from which the state can be steered toward $x_0$ in a finite number of steps, contains a nonempty open subset of $X$. In case of reversible systems, including those that arise from sampling of continuous-time systems, it is known that the original system is  backward accessible toward $x_0$ if and only if the time inverse system is forward accessible from $x_0$. As shown in  \cite{Aranda}, reversibility assumption may be relaxed and replaced by submersivity assumption.
\end{remark}
\begin{example}
Consider again the system (\ref{system_after_dyscretization}) for investigation of backward accessibility.  The backward-shift equations of (\ref{system_after_dyscretization}) are
\be  \label{backwardexample} x_1^{\langle -1 \rangle}\!=\frac{1}{\Gamma}[ (bx_1+x_2)T-x_1 ],~~ x_2^{\langle -1 \rangle}\!=\frac{1}{\Gamma}[ u^{\langle -1 \rangle} x_1aT-x_2 ]
\ee
with $\Gamma \coloneqq u^{\langle -1 \rangle } a T^2 +bT-1$. From (\ref{backwardexample}) the inverse (in time) system is obtained as
\be  \label{inverse} z_1^{\langle 1 \rangle}\!=\frac{1}{\Lambda}[ (bz_1+z_2)T-z_1 ],~~  z_2^{\langle 1 \rangle}\!=\frac{1}{\Lambda}[ v z_1aT-z_2 ]
\ee
with $v$ as input and $\Lambda \coloneqq v a T^2 +bT-1$. Based on Remark \ref{remark17}, we obtain the singular points of (forward) accessibility of (\ref{inverse}). The system (\ref{inverse}) is generically forward accessible (in 2 steps), with $I_{M_2}=\langle z_1 (z_1 -T(bz_1 +z_2)) \rangle $. By use of Algorithm 3, we obtain $\bar{I}_2:=I_{M_2}=\langle z_1 (z_1 -T(bz_1 +z_2)), ~ \bar{I}_3=\langle z_1 , z_2 \rangle,~ \bar{I}_4=\bar{I}_3$. In conclusion, the inverse system (\ref{inverse}) is forward accessible from every point in at most 3 steps, except for the point $(0,0)$ which is singular point of forward accessibility for (\ref{inverse}). As a result, the original system (\ref{system_after_dyscretization}) is backward accessible to every point of the state space in at most 3 steps, except for the point $(0,0)$ which is singular point of backward accessibility for (\ref{system_after_dyscretization}). 
\end{example}

\subsection{Analytic case}\label{4.2}

Since the ring of analytic functions on $\R^n$, unlike $\R [x]$, is not  Noetherian, the accessibility index may  fail to be finite, as in Example \ref{ex_11} below. 
 
 \color{black}

\begin{example}\label{ex_11}
 Consider the following system
 \begin{equation}\label{ex_non_rat}
  x^{\langle 1\rangle}=\frac{x}{2}+u{\rm sin}\pi x,\;\;\;x\in\R,\;u\in\R.
 \end{equation}
 In this case, the set of points from which the system (\ref{ex_non_rat}) is not accessible in one step is $S_1=\Z$. Going forward
 and assuming that $x\in S_1$, we obtain that $x^{\langle 1\rangle}=\frac{x}{2}$ and
 $x^{\langle 2\rangle}=\frac{x}{4}+u^{\langle 1\rangle}{\rm sin}\frac{\pi x}{2}.$
 Since $x^{\langle 2\rangle}$ is independent on $u$ and $u^{\langle 1\rangle}$ if and only if $x\in 2\Z$, so  $S_2=2\Z$. Inductively we can see that $S_k=2^{k-1}\Z$, $k\in\N$. So, we obtain infinite sequence $S_1\supset S_2\supset\ldots$ that will never stabilize. 
\end{example}

 Fortunately, the results of the previous subsection can be extended to the analytic case when  the state space  $X$ is assumed to be compact, since the ring of real analytic functions $A_X$ on a compact semianalytic set $X$ is Noetherian \cite{Frish}.
If $I$ is an ideal of $A_X$, then as before $\mathcal{Z}(I)$ denotes the zero-set of $I$ and $\mathcal{I}(\mathfrak{Z})$ denotes the zero-ideal of $\mathfrak{Z}\subset X$, which is now an ideal of $A_X$.

\begin{theorem}\label{Risler}\cite{Risler}
 Let $I$ be an ideal of $A_X$. Then $\mathcal{I}(\mathcal{Z}(I))=\sqrt[\R]{I}$.
\end{theorem}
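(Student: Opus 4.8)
The statement is Risler's real Nullstellensatz for the ring $A_X$ of real analytic functions on a compact semianalytic set, so the plan is to recall the structure of the proof and indicate which inputs from the literature do the work, rather than to reprove it from scratch. The inclusion $\sqrt[\R]{I}\subseteq\mathcal{I}(\mathcal{Z}(I))$ is elementary: if $p^{2m}+\sum_{i=1}^k q_i^2\in I$, then on any $x\in\mathcal{Z}(I)$ the left-hand side evaluates to $0$, and being a sum of squares of real numbers each summand vanishes, so $p(x)^{2m}=0$, i.e. $p(x)=0$; hence $p\in\mathcal{I}(\mathcal{Z}(I))$. All of the content lies in the reverse inclusion $\mathcal{I}(\mathcal{Z}(I))\subseteq\sqrt[\R]{I}$.

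For the reverse inclusion I would invoke the abstract fact about real radicals valid in any commutative ring (see \cite{Bochnak}): $\sqrt[\R]{I}$ equals the intersection of all \emph{real} prime ideals $\mathfrak{p}\supseteq I$, where $\mathfrak{p}$ is real when $\sum a_i^2\in\mathfrak{p}$ forces every $a_i\in\mathfrak{p}$, equivalently when the fraction field of $A_X/\mathfrak{p}$ is formally real. Since $I\subseteq\mathfrak{p}$ yields $\mathcal{Z}(\mathfrak{p})\subseteq\mathcal{Z}(I)$ and hence $\mathcal{I}(\mathcal{Z}(I))\subseteq\mathcal{I}(\mathcal{Z}(\mathfrak{p}))$, and since $\mathfrak{p}\subseteq\mathcal{I}(\mathcal{Z}(\mathfrak{p}))$ is automatic, it suffices to prove $\mathcal{I}(\mathcal{Z}(\mathfrak{p}))=\mathfrak{p}$ for every real prime $\mathfrak{p}$; intersecting over all such $\mathfrak{p}$ then gives $\mathcal{I}(\mathcal{Z}(I))\subseteq\sqrt[\R]{I}$.

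To establish $\mathcal{I}(\mathcal{Z}(\mathfrak{p}))=\mathfrak{p}$ I would argue as follows. First, $A_X$ is Noetherian \cite{Frish}, so $\mathfrak{p}=(g_1,\dots,g_s)$ and $\mathcal{Z}(\mathfrak{p})$ is a genuine analytic subset of the compact $X$; it is nonempty because $\mathfrak{p}$ lies in a maximal ideal, and on a compact $X$ every maximal ideal of $A_X$ has the form $\mathfrak{m}_{x_0}=\{h:h(x_0)=0\}$ for a point $x_0\in X$, so $x_0\in\mathcal{Z}(\mathfrak{p})$. The substantive step is to show that the real points of $\mathcal{Z}(\mathfrak{p})$ are Zariski dense enough: given $f\notin\mathfrak{p}$ one must produce $x\in\mathcal{Z}(\mathfrak{p})$ with $f(x)\neq 0$. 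For this, choose a point $x_0\in\mathcal{Z}(\mathfrak{p})$ at which the zero-germ of $\mathfrak{p}$ has maximal dimension, localize at $x_0$, and use that $\mathcal{O}_{X,x_0}$ is a regular Noetherian local ring together with Weierstrass preparation and the analytic curve-selection lemma of {\L}ojasiewicz; formal reality of the fraction field of $A_X/\mathfrak{p}$ is precisely what makes the real locus of the complex zero-germ of $\mathfrak{p}$ at $x_0$ large, so that an $f\notin\mathfrak{p}$ cannot vanish identically on it. Gluing the local conclusions over the compact $X$, using coherence of the ideal sheaf generated by $\mathfrak{p}$, yields $\mathcal{I}(\mathcal{Z}(\mathfrak{p}))=\mathfrak{p}$.

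The main obstacle is exactly this last step. Unlike the polynomial Dubois--Risler theorem, one cannot use Tarski--Seidenberg elimination or Artin--Lang to transfer an ordering of the (non-finitely-generated over $\R$) fraction field of $A_X/\mathfrak{p}$ to an honest real point of $X$; that route is simply unavailable. Risler's resolution replaces the transfer by local analytic geometry --- Noetherianity of $A_X$, the identification of $\operatorname{Max}A_X$ with $X$ forced by compactness, coherence, Weierstrass preparation and curve selection --- and it is this package, established in \cite{Risler} (with \cite{Frish} for Noetherianity), on which the statement rests; the rest of the argument is the routine reduction to real prime ideals sketched above.
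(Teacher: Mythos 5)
This statement is quoted in the paper directly from \cite{Risler}; the paper supplies no proof of its own, so there is no internal argument to compare your proposal against. Taking your write-up on its own terms: the elementary inclusion $\sqrt[\R]{I}\subseteq\mathcal{I}(\mathcal{Z}(I))$ is argued correctly, and the reduction of the hard inclusion to showing $\mathcal{I}(\mathcal{Z}(\mathfrak{p}))=\mathfrak{p}$ for every real prime $\mathfrak{p}\supseteq I$, via the characterization of $\sqrt[\R]{I}$ as the intersection of the real primes containing $I$, is the standard and correct skeleton of the real Nullstellensatz. You also correctly identify the two analytic inputs that make the compact case work: Noetherianity of $A_X$ \cite{Frish} and the identification of maximal ideals with points of $X$ (which uses compactness, since a nowhere-vanishing sum of squares is then bounded away from zero and hence a unit).

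The caveat is that the genuinely substantive step --- that for a real prime $\mathfrak{p}$ and $f\notin\mathfrak{p}$ one can find $x\in\mathcal{Z}(\mathfrak{p})$ with $f(x)\neq 0$ --- is only gestured at (``Weierstrass preparation, curve selection, coherence, formal reality makes the real locus large''); as written this is an attribution to Risler's machinery rather than a proof, and the gluing of local conclusions over $X$ is not actually carried out. Since the paper itself treats the theorem as an imported result, deferring exactly this step to \cite{Risler} is an acceptable resolution, but you should present it explicitly as a citation of Risler's argument rather than as a proof sketch that could be completed in a few lines --- the transfer from formal reality of the fraction field of $A_X/\mathfrak{p}$ to actual real zeros is the whole difficulty of the theorem, and nothing in your outline substitutes for it.
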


For a fixed $k>n$ and $\textbf{u}\in U^k$ let $q^i_{k,\textbf{u}}$, $i=1,\ldots,i_k$, denotes determinants of all $n\times n$ submatrices of $M_k(x,\textbf{u})$, where   $i_k$ is the number of minors. For any control sequence $\textbf{u}$ the function $q^i_{k,\textbf{u}}(x)$ is an analytic function parameterized by $\textbf{u}$. Therefore from the definition of $S_k$ and using Lemma \ref{lemma1} and analyticity, $S_k=\mathcal{Z}(I_{M_k})$, where $I_{M_k} \coloneqq \langle q^i_{k,\textbf{u}}:i=1,\ldots,i_k,\textbf{u}\in U^k\rangle$ is the ideal of analytic functions $q^i_{k,\textbf{u}}$.
  Define $I_{k}=\mathcal{I}(S_{k})$. Theorem \ref{Risler} assures that $ I_k=\sqrt[\R]{I_{M_k}}$. Consider the ideals
 $\overline{I}_k :=  \bigcup_{l=n}^k I_{M_l}$
 for any $k\geq n$. By construction, we have the ascending chain $ \overline{I}_{n}\subseteq \overline{I}_{n+1}\subseteq\ldots $ of ideals in $A_X$.

\begin{theorem}\label{theorem2A}
 Assume that the system (\ref{system}) is analytic on a compact semianalytic set $X$ and generically accessible. Then under submersivity assumption (\ref{submersivity})
 \begin{enumerate}
 \item[(i)]  Properties $(i)$ - $(iii)$ of Theorem \ref{theorem2} are valid.
 \item[(ii)] There exists a finite integer $\kappa$ such that $\overline{I}_\kappa=\overline{I}_{\kappa+1}$. Moreover, $r^* \leq \kappa$ and $\mathcal{Z}(\overline{I}_\kappa)=S_{\infty}$.
 \end{enumerate}
\end{theorem}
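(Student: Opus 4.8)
The plan is to re-run the proofs of Theorems \ref{theorem2} and \ref{theorem12}, replacing at each step the property of the polynomial ring $\R[x]$ that was used by its analytic counterpart on the compact set $X$. Three facts do all the work: (a) the ring $A_X$ of real analytic functions on the compact semianalytic set $X$ is Noetherian, \cite{Frish}; (b) the real Nullstellensatz $\mathcal{I}(\mathcal{Z}(I))=\sqrt[\R]{I}$ holds in $A_X$ (Theorem \ref{Risler}); and (c) the identity $S_k=\mathcal{Z}(I_{M_k})$, established in the paragraph preceding the theorem from Lemma \ref{lemma1} and analyticity. By (a) the ideal $I_{M_k}$ — although presented through a family of minors $q^i_{k,\textbf{u}}$ parameterized by $\textbf{u}\in U^k$ — is finitely generated, so $S_k$ is a genuine real-analytic variety and $\mathcal{Z}(\mathcal{I}(S_k))=S_k$; equivalently, setting $I_k:=\mathcal{I}(S_k)=\sqrt[\R]{I_{M_k}}$ via (b), one has $I_k=I_{k+1}$ if and only if $S_k=S_{k+1}$.

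For part (i) I would argue as follows. By Lemma \ref{lemma1} the chain $S_n\supseteq S_{n+1}\supseteq\cdots$ is descending, so $I_n\subseteq I_{n+1}\subseteq\cdots$ is an ascending chain of ideals in $A_X$, which stabilizes by (a); let $r^*$ be the least index with $I_{r^*}=I_{r^*+1}$, equivalently (by the previous paragraph) the least index with $S_{r^*}=S_{r^*+1}$. The rest is the measure argument from the proof of Theorem \ref{theorem2}: $S_{r^*}=S_{r^*+1}$ together with $\Phi(S_{k+1},U)\subseteq S_k$ gives $\Phi(S_{r^*},U)\subseteq S_{r^*}$, hence $\mathcal{A}^+(x_0)\subseteq S_{r^*}$ for every $x_0\in S_{r^*}$; since generic accessibility makes $S_{r^*}\subseteq S_n$ a proper analytic subset of Lebesgue measure zero (Theorem \ref{theorem1} gives some $n\times n$ minor of $M_n$ not identically zero on $X\times U^n$, and $S_n\times U^n$ is contained in the zero set of all these minors, so a Fubini argument applies, exactly as in Theorem \ref{theorem2}), $\mathcal{A}^+(x_0)$ has empty interior and $x_0\in S_\infty$; with $S_\infty\subseteq S_{r^*}$ from Lemma \ref{lemma1} we obtain $S_{r^*}=S_\infty$. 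This yields stabilization (Theorem \ref{theorem2}(i)), strictness of the inclusions before $r^*$ by minimality of $r^*$ (Theorem \ref{theorem2}(ii)), and the characterization of $r^*$ (Theorem \ref{theorem2}(iii)).

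For part (ii): the ideals $\overline{I}_k=\bigcup_{l=n}^k I_{M_l}$ form an ascending chain in $A_X$, so by (a) there is a finite $\kappa$ with $\overline{I}_\kappa=\overline{I}_{\kappa+1}$. Using $\mathcal{Z}(\bigcup_l I_{M_l})=\bigcap_l\mathcal{Z}(I_{M_l})=\bigcap_{l=n}^k S_l=S_k$ (the last equality by the descending chain), $\overline{I}_\kappa=\overline{I}_{\kappa+1}$ gives $S_\kappa=S_{\kappa+1}$, hence $S_\kappa=S_\infty$ by part (i), i.e. $\mathcal{Z}(\overline{I}_\kappa)=S_\infty$; and since $r^*$ is the least index with $S_{r^*}=S_{r^*+1}$ while $S_\kappa=S_{\kappa+1}$, we get $r^*\le\kappa$. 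This is precisely Theorem \ref{theorem12} transported to $A_X$.

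I expect no genuinely new obstacle: once (a)--(c) are in hand, the bookkeeping is identical to the rational case, and the non-obvious inputs — Frisch's Noetherianity of $A_X$ and Risler's real Nullstellensatz — are exactly what compactness of $X$ provides. The one point deserving care is the equality $\mathcal{Z}(\mathcal{I}(S_k))=S_k$, which is why one must invoke Noetherianity to reduce the $\textbf{u}$-parameterized family $I_{M_k}$ to a finitely generated ideal and must use Theorem \ref{Risler} rather than a naive Nullstellensatz; and it is worth recalling, as Example \ref{ex_11} shows, that dropping compactness collapses the whole argument, since then $A_X$ need not be Noetherian.
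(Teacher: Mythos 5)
Your proposal is correct and follows essentially the same route as the paper, whose own proof simply says to repeat the arguments of Theorems \ref{theorem2} and \ref{theorem12} using the Noetherianity of $A_X$ on a compact semianalytic set and Risler's real Nullstellensatz (Theorem \ref{Risler}); your write-up just makes that transfer explicit (phrasing stabilization via the ascending chain of ideals $I_k=\mathcal{I}(S_k)$ rather than the descending chain of zero sets, which is equivalent bookkeeping).
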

\begin{proof}
Using the Noetherian property of ${A_X}$ on compact semianalytic set $X$ and Theorem \ref{Risler},  the proofs of the parts $(i)$ and $(ii)$ are similar  to the proofs of   Theorems \ref{theorem2} and \ref{theorem12}, respectively.
\end{proof}
As a consequence of the part $(i)$ of Theorem \ref{theorem2A} we have the ascending chain $I_{n}\subsetneq I_{n+1}\subsetneq\ldots\subsetneq I_{r^*}=I_{r^*+1}=\cdots$ of ideals of $A_X$ with  $I_{k}=\mathcal{I}(S_{k})=\sqrt[\R]{I_{M_k}}$, $k>n$, which means that the accessibility index $r^*$ of the system is the first $k$ for which we have $I_k=I_{k+1}$.
Part $(ii)$  ot Theorem \ref{theorem2A} provides an alternative approach for obtaining an upper bound on $r^*$ along with the set of accessibility singular points in the following way: compute the chain of ideals $\overline{I}_k$ until for some $\kappa$ we reach that $\overline{I}_\kappa=\overline{I}_{\kappa+1}$. Then  $\kappa$ serves as an upper bound on the accessibility index $r^*$, and $S_\infty=\mathcal{Z}(\overline{I}_\kappa)$. 
. 
%
\begin{example}\label{ex_11A}
 Let us come back to system (\ref{ex_non_rat}), but now take $(x,u)\in[0,2]\times [-1,1]$. Using the same reasoning as in Example \ref{ex_11} we obtain $S_1=\{0;2\}$, $S_2=\{0\}=S_3=\ldots$. So, now the sequence $S_1\supset S_2\supset\ldots$ stabilizes.
\end{example}

\subsection{\ Properties of the sets $S_k$} 

The sets $S_k$ are important on their own, as in practice states on the sets $S_k$ for  large enough $k$'s may be considered as singular points of accessibility. Also invariance property of $S_\infty$ allows to reduce the dimension of the system on $S_\infty$, and intuitively in a neighborhood around it.

\begin{theorem}
For a generically accessible system, there exists an integer $k^* \leq n$ such that for  $k\ge k^*$ the set $S_{k}$ is a zero-measure set, and for  $k< k^*$ the set $S_k$ is the entire $X$.
\end{theorem}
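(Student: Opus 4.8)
The plan is to prove, for every $k\ge 0$, the dichotomy that $S_k$ is \emph{either} all of $X$ \emph{or} a set of Lebesgue measure zero, and then to combine this with the monotonicity $X=S_0\supseteq S_1\supseteq S_2\supseteq\cdots$ and with Theorem~\ref{theorem1}, which forces $S_n$ into the second alternative. Granting the dichotomy, $k^*$ is then simply the first index at which $S_k$ ceases to equal $X$, and Theorem~\ref{theorem1} gives $k^*\le n$.

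First I would establish the dichotomy. Fix $k\ge 1$. By Lemma~\ref{lemma2}, $S_k=\{x_0\in X:{\rm rank}\,M_k(x_0,\mathbf u)<n\ \text{for all}\ \mathbf u\in U^k\}$. If ${\rm rank}\,M_k(x_0,\mathbf u)<n$ holds at \emph{every} $(x_0,\mathbf u)\in X\times U^k$, then $S_k=X$. Otherwise there is a pair $(x_1,\mathbf u_1)$ at which some fixed $n\times n$ minor $\delta$ of $M_k$ does not vanish; since the entries of $M_k$ are analytic on $X\times U^k$ (they are built from $\Phi$ and its partial derivatives by forward shifts), the function $x\mapsto\delta(x,\mathbf u_1)$ is analytic and not identically zero on $X$, so its zero set is a proper analytic subset of $X$ (a proper algebraic set in the rational case), hence of measure zero. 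As $S_k\subseteq\{x\in X:\delta(x,\mathbf u_1)=0\}$, the set $S_k$ has measure zero. With the convention $S_0:=X$ the dichotomy holds for all $k\ge 0$; the two alternatives are mutually exclusive because $X$ has positive Lebesgue measure.

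It remains to assemble the pieces. A state from which the system is non-accessible in $1,\dots,k+1$ steps is in particular non-accessible in $1,\dots,k$ steps, so $S_{k+1}\subseteq S_k$ and thus $X=S_0\supseteq S_1\supseteq S_2\supseteq\cdots$. Since the system is generically accessible, Theorem~\ref{theorem1} gives that $M_n(x,\mathbf u)$ has maximal rank $n$ on $X\times U^n$, so the second alternative of the dichotomy applies with $k=n$; hence $S_n$ has measure zero and, in particular, $S_n\ne X$. Define $k^*:=\min\{k\ge 0:S_k\ne X\}$, so that $1\le k^*\le n$. For $k<k^*$ we have $S_k=X$ by minimality, and for $k\ge k^*$ the monotonicity gives $S_k\subseteq S_{k^*}\subsetneq X$, so $S_k\ne X$ and the dichotomy forces $S_k$ to have measure zero. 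This is exactly the asserted statement. The only substantive point is the dichotomy itself, resting on the identity theorem for analytic functions (a nonzero analytic function has a measure-zero zero set) and on $X$ having positive measure; everything else is monotonicity bookkeeping plus a direct appeal to Lemma~\ref{lemma2} and Theorem~\ref{theorem1}.
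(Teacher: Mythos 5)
Your proposal is correct and follows essentially the same route as the paper's proof: both reduce the statement via Lemma~\ref{lemma2} to whether $M_k$ attains rank $n$ somewhere (in which case analyticity makes $S_k$ a zero-measure set) or is rank-deficient everywhere (in which case $S_k=X$), use Theorem~\ref{theorem1} to place this transition at some $k^*\le n$, and finish with the descending chain (\ref{descending_seq}). Your explicit ``all of $X$ or measure zero'' dichotomy, obtained by freezing a control $\mathbf u_1$ and applying the identity theorem in $x$, is just a slightly more detailed packaging of the paper's appeal to generic full rank of $M_{k^*}$, sharing the same implicit identification of generic and maximal rank that the paper adopts.
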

\begin{proof}
From Theorem  \ref{theorem1} we know that for some $k^* \leq n$ the matrix $M_{k^*}$ becomes full-rank, which considering the smallest such $k^*$, the matrices $M_k$ for all $k < k^*$ has rank less than $n$ for all $x\in X$ and therefore for every $k< k^*$ the set $S_k$ is the entire $X$. Since $M_{k^*}$ is generically full rank, the set $S_{k^*}$ is a zero-measure set, and by (\ref{descending_seq}) for every $k\ge k^*$ the set $S_{k}$ is a zero-measure set.
\end{proof}
\color{black}

 With our notations, a set $S\subseteq X$ is  a \emph{forward invariant set} for the system (\ref{system}) if for every $x\in S$ it holds that $\Phi_x^i(U^i )\in S$ for all $i\in \N$.

\begin{theorem}
 Assume that the system (\ref{system}) is rational or is analytic on a compact semianalytic set and generically accessible. The set $S_{r^* }$ is a forward invariant set for the system. In addition, any other forward invariant set of measure zero is contained in $S_{r^* }$.
\end{theorem}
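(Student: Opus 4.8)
The plan is to derive both claims from the identity $S_{r^*}=S_\infty$, established in Theorem \ref{theorem2}(ii)--(iii) for the rational case and in Theorem \ref{theorem2A}(i) for the analytic case on compact $X$, together with the elementary inclusion $\Phi(S_{k+1},U)\subseteq S_k$ recorded right after the chain (\ref{descending_seq}).

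For forward invariance of $S_{r^*}$: since $r^*$ is the smallest index with $S_{r^*}=S_{r^*+1}$, taking $k=r^*$ in $\Phi(S_{k+1},U)\subseteq S_k$ gives
\[
\Phi(S_{r^*},U)=\Phi(S_{r^*+1},U)\subseteq S_{r^*},
\]
so $S_{r^*}$ is invariant under the multivalued one-step map $\Phi(\cdot,U)$. I would then conclude by induction on $i$: assuming $\Phi_x^i(U^i)\subseteq S_{r^*}$ for every $x\in S_{r^*}$, one gets $\Phi_x^{i+1}(U^{i+1})=\Phi\bigl(\Phi_x^i(U^i),U\bigr)\subseteq\Phi(S_{r^*},U)\subseteq S_{r^*}$. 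Hence $\Phi_x^i(U^i)\subseteq S_{r^*}$ for all $x\in S_{r^*}$ and all $i\in\N$, i.e. $S_{r^*}$ is forward invariant in the sense defined above.

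For maximality, I would take an arbitrary forward invariant set $T\subseteq X$ of measure zero and fix $x\in T$. Forward invariance gives $\Phi_x^i(U^i)\subseteq T$ for all $i$, hence by (\ref{A+}) the forward-attainable set $\mathcal{A}^+(x)=\bigcup_{i\in\N}\Phi_x^i(U^i)$ lies inside $T$ and so has measure zero; in particular ${\rm int}\,\mathcal{A}^+(x)=\emptyset$, meaning the system is non-accessible from $x$, i.e. $x\in S_\infty$. Since $x$ was arbitrary, $T\subseteq S_\infty=S_{r^*}$. Finally $S_{r^*}=S_\infty$ is itself of measure zero (by generic accessibility, exactly as used in the proof of Theorem \ref{theorem2}), so $S_{r^*}$ is in fact the largest forward invariant set of measure zero.

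There is no real analytic obstacle here: everything rests on the already-proved equality $S_{r^*}=S_\infty$ --- hence on the Noetherianity of $\R[x]$, respectively of $A_X$ for compact $X$, exploited in Theorems \ref{theorem2} and \ref{theorem2A} (recall Example \ref{ex_11}, where this equality fails and $S_{r^*}$ is not even defined). The only mechanical steps are the induction passing from one-step invariance to invariance under all $\Phi_x^i(U^i)$, and the standard fact that a subset of $\R^n$ of measure zero has empty interior.
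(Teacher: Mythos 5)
Your proof is correct and follows essentially the same route as the paper: invariance of $S_{r^*}$ comes from $\Phi(S_{r^*+1},U)\subseteq S_{r^*}$ together with $S_{r^*}=S_{r^*+1}$ (which the paper delegates to Theorems \ref{theorem2} and \ref{theorem2A}, you merely make the induction to all $i$ explicit), and the maximality argument --- $\mathcal{A}^+(x)\subseteq T$ forces ${\rm int}\,\mathcal{A}^+(x)=\emptyset$, hence $x\in S_\infty=S_{r^*}$ --- is exactly the paper's. No gaps.
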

\begin{proof}
 The first part was proved in Theorems \ref{theorem2} and \ref{theorem2A} respectively. For the second part, assume that a set 
 $\mathcal{B}\subset X$ is a zero measure set that is also forward invariant for the system (\ref{system}). Let $y\in\mathcal{\mathcal{B}}$. Since $\mathcal{B}$ is forward invariant, then $\mathcal{A}^+(y)\subset\mathcal{B}$, and since $\mathcal{B}$ is a zero measure set, we conclude that ${\rm int}\mathcal{A}^+(y)=\emptyset$. So, the system (\ref{system}) is  non-accessible from any $y\in\mathcal{B}$ and hence $\mathcal{B}\subset S_{r^*}$.
\end{proof}

\begin{remark} \label{remark2}
It was correctly shown in \cite{Kawano_2016} that for any $x_0\in X$ and fixed time step $k$, the set $\mathcal{A}_k^+(x_0)$ is a semialgebraic set.  Unfortunately,  $\mathcal{A}^+(x_0)$ which is the union of all sets $\mathcal{A}_k^+(x_0)$ for $k \in \mathbb{Z}$, is not necessarily so, because an infinite union of semialgebraic sets in not necessarily a semialgebraic set. Therefore  properties of the set $\mathcal{A}^+(x_0)$ may not locally coincide with the properties of its Zariski closure. This point was overlooked in the proof of Lemma 4.1 of \cite{Kawano_2016}, and therefore this Lemma is not correct. For example,
consider the linear discrete-time system $x_1^{\langle 1 \rangle}= u, ~~x_2^{\langle 1 \rangle}=x_2+1$.  For the initial state $x_0=(0,0)$, the set $\mathcal{A}_k^+(x_0)$ for $k\geq 0$ is the line $x_2=k$, and $\mathcal{A}^+(x_0)$ is the union of all lines $x_2=k$, and $\overline{\mathcal{A}^+(x_0)}=\mathbb{R}^2$. Based on Lemma 4.1 of \cite{Kawano_2016} the system must be accessible from $x_0$.  However, the dynamics of $x_2$ is completely independent of $u$, and obviously the system is not accessible anywhere.
\end{remark}

\section{Conclusion}
The paper suggests minimum number of steps needed to decide   (non-)accessibility  for two large subclasses of nonlinear systems - rational systems, and  analytic systems defined on a compact state space. This number  may be far larger than the dimension of the state, unlike in the  case of generic accessibility. As a byproduct, the set of singular  points  $S_\infty$ of accessibility have been  characterized for these  subclasses. From such points the system never becomes accessible. The  set $S_\infty$ can be found as a limit of non-increasing algebraic subsets  $S_k$ of state space. For the subclasses addressed,  the sequence $S_k$  stabilizes in a finite number of steps, thanks to the fact that the   rings of functions defined by such systems are Noetherian. Two  algorithms are given. One of them allows to compute the accessibility  index, i.e. the minimal number of steps to decide accessibility.  Another algorithm is easier to apply but provides only an upper bound  for accessibility index and also computes $S_\infty$.

\section*{Acknowledgment}

The works of Zbigniew Bartosiewicz and Ewa Pawluszewicz have been supported by grants No. S/WI/1/2016 and S/WM/1/2016, respectively, of Bialystok University of Technology, financed by Polish Ministry of Science and Higher Education.



\ifCLASSOPTIONcaptionsoff
  \newpage
\fi



%

\section{Appendix}

%
%
%
%
 Algorithm 1 gives the accessibility index $r^*$ and $I_{r^*}$ (and therefore $S_\infty$).

%

\begin{algorithm}[H] 
\caption{(Finding $r^*$ and $S_\infty$)}
\label{<your label for references later in your document>}
\begin{algorithmic}[1]
\State \textbf{Initialization:} $k \leftarrow n$, $I_{k}\leftarrow \sqrt[\R]{I_{M_n}}$
\State compute $M_{k+1}$ (using (\ref{M_k}))
\State compute $I_{M_{k+1}}$ 
\State compute $\sqrt[\R]{I_{M_{k+1}}}$ and $I_{k+1}\leftarrow \sqrt[\R]{I_{M_{k+1}}}$
\If {$I_k=I_{k+1}$}
    \State stop and return $r^*\leftarrow k$ and $S_\infty \leftarrow \mathcal{V}(I_k)$
\Else
    \State   $k\leftarrow k+1$, $I_k\leftarrow I_{k+1}$ and go to \textit{step 2}
\EndIf
\end{algorithmic}
\end{algorithm}
%
Algorithm 2 gives $\kappa$ (an upper bound for the accessibility index $r^*$), as well as the set $S_\infty$.
\begin{algorithm}[H] 
\caption{(Finding $\kappa$ and $S_\infty$)}
\label{<your label for references later in your document>}
\begin{algorithmic}[1]
\State \textbf{Initialization:} $k \leftarrow n$, $\bar{I}_{k}\leftarrow I_{M_n}$
\State $k \leftarrow n$, $\bar{I}_{k}\leftarrow I_{M_n}$
\State compute $I_{M_{k+1}}$ 
\If {$\bar{I}_k \cup I_{M_{k+1}}=\bar{I}_k$}
    \State stop and return $\kappa \leftarrow k$ and  $S_\infty \leftarrow \mathcal{V}(\bar{I}_k)$
\Else
    \State  $k\leftarrow k+1$, $\bar{I}_k\leftarrow \bar{I}_k \cup I_{M_{k+1}}$ and go to \textit{step 2}
\EndIf
\end{algorithmic}
\end{algorithm}
\color{black}

\end{document}